\theoremstyle{definition}
\newtheorem{definition}{Definition}
\theoremstyle{plain}
\newtheorem{theorem}{Theorem}
\newtheorem{lemma}[theorem]{Lemma}
\newtheorem{question}{Question}
\DeclareSymbolFont{rsfscript}{OMS}{rsfs}{m}{n}
\DeclareSymbolFontAlphabet{\mathrsfs}{rsfscript}
\DeclareMathOperator{\Ker}{Ker}
\newcommand{\mA}{\mathrsfs{A}}
\newcommand{\mM}{\mathrsfs{M}}
\newcommand{\gR}{\mathrel{\mathfrak{R}}}
\newcommand{\gL}{\mathrel{\mathfrak{L}}}
\newcommand{\gD}{\mathrel{\mathfrak{D}}}
\newcommand{\sa}{synchronizing automata}
\newcommand{\san}{synchronizing automaton}
\newcommand{\sDFAs}{synchronizing DFAs}
\newcommand{\sDFA}{synchronizing DFA}
\pgfplotsset{compat=1.10}
\title{Winning Strategies for the Synchronization Game\\ on Subclasses of Finite Automata\thanks{Mikhail Volkov was supported by the Ministry of Science and Higher Education of the Russian Federation, project FEUZ-2023-2022.}}
\author{Henning Fernau \qquad\qquad  Carolina Haase\institute{Universit\"at Trier, Fachbereich IV, Informatikwissenschaften, Trier, Germany}\email{fernau@uni-trier.de  \qquad\qquad haasec@uni-trier.de}  \and Stefan Hoffmann\email{hoffmanns.tcs@gmail.com}  \and Mikhail Volkov\institute{Institute of Natural Sciences and Mathematics, Ural Federal University, Ekaterinburg, Russia}\email{m.v.volkov@urfu.ru}}
\begin{document}
\maketitle

\begin{abstract}
We exhibit a winning strategy for Synchronizer in the synchronization game on every \san\ in whose transition monoid the regular $\gD$-classes form subsemigroups.
\end{abstract}

\section{Introduction}
\label{intro}

A complete deterministic finite automaton (DFA) is a pair $(Q,\Sigma)$ of two finite sets equipped with a map $Q\times\Sigma\to Q$ whose image at $(q,a)\in Q\times\Sigma$ is denoted by $q{\cdot}a$. We call $Q$ the \emph{state set} and $\Sigma$ the \emph{input alphabet}. Elements of $Q$ and $\Sigma$ are referred to as \emph{states} and, respectively, \emph{letters}, and for a state $q\in Q$ and a letter $a\in\Sigma$, we refer to $q{\cdot}a$ as the result of the \emph{action of $a$ at} $q\in Q$. The action of letters in $\Sigma$ naturally extends to the action of words over $\Sigma$: if $w=a_1a_2\cdots a_n$ with $a_1,a_2,\dots,a_n\in\Sigma$, then $q{\cdot}w:=(\dots((q{\cdot}a_1){\cdot}a_2)\dots){\cdot}a_n$.

A DFA $(Q,\Sigma)$ is called \emph{synchronizing} if there exists a word $w$ over $\Sigma$ whose action brings the DFA to one particular state no matter at which state $w$ is applied: $q{\cdot}w=q'{\cdot}w$ for all $q,q'\in Q$. Any word $w$ with this property is said to be a \emph{reset} word for the automaton.

Synchronizing automata serve as transparent and natural models of error-re\-sis\-tant systems in many applications (coding theory, robotics, testing of reactive systems) and reveal interesting connections with symbolic dynamics, substitution systems, and other parts of mathematics. We refer the reader to chapter~\cite{KV} of the `Handbook of Automata Theory' and survey \cite{Vo22} for an introduction to the area and an overview of its state-of-the-art.

The fourth-named author initiated viewing \sa{} through the lens of game theory; the motivation for this came from a game-theoretical approach to software testing suggested in~\cite{BGNV06}. In a synchronization game on a~DFA $\mathrsfs{A}$, two players, Alice (Synchronizer) and Bob (Desynchronizer), take turns choosing letters from the input alphabet of $\mA$. Alice who wants to synchronize $\mathrsfs{A}$ wins when the sequence of chosen letters forms a reset word. Bob aims to prevent synchronization or, if synchronization is unavoidable, to delay it as long as possible. Provided that both players play optimally, the outcome of such a game depends on the automaton only. This raises the problem of classifying \sa{} into those on which Alice and, respectively, Bob have a winning strategy. DFAs on which Alice can ensure win are of interest because they are more amenable to synchronization, in a sense. For brevity, we call such DFAs \emph{A-automata}.

A few initial results on synchronization games were obtained in~\cite{FMV}. In particular, \cite[Theorem 4]{FMV} provides an algorithm that, given a DFA $\mathrsfs{A}$ with $n$ states and $k$ input letters, decides who has a winning strategy in the synchronization game on $\mA$ in $O(n^2k)$ time. Thus, for any individual DFA, one can determine whether it is an A-automaton. Here, however, we are interested in general conditions ensuring that all synchronizing DFAs of a certain type are A-automata. One such condition was mentioned in~\cite{FMV}: Alice always wins on definite DFAs introduced in~\cite{PRS:1963}.

In \cite{FUN22}, the first three authors of the present note showed that within two further families of automata considered in the literature---weakly acyclic DFAs and commutative DFAs---every \sDFA\ is an A-automaton. Here we continue this line of research by designing a winning strategy for Alice that applies to \sa\ from yet another family of DFAs. Automata in this family are distinguished by a structure feature of their transition monoids: the regular $\gD$-classes in these monoids form subsemigroups. The set $\mathbf{DS}$ of all finite semigroups with regular $\gD$-classes being subsemigroups plays a distinguished role in the algebraic theory of regular languages; see \cite[Chapter 8]{Almeida-95}. Therefore, DFAs with transition monoids in $\mathbf{DS}$ often show up in the literature; see, e.g., \cite{AMSV09,AlSt09}. As they seem to have no specific name so far, we coin them DS-automata. Thus, our main result says that Alice can win the synchronization game on every synchronizing DS-automaton. Since the family of DS-automata is extensive and encompasses all the families above (definite, weakly acyclic, and commutative DFAs), this provides a vast generalization of the mentioned results from \cite{FMV,FUN22}.

Our approach is algebraic as it exploits the structural properties of transition monoids. We have collected all necessary prerequisites from semigroup theory in Section~\ref{prelim} to make the note self-contained, to a reasonable extent. Section~\ref{main} presents Alice's winning strategy in synchronization games on synchronizing DS-automata. In Section~\ref{discussion}, we relate our result to previously found facts about winning strategies for synchronization games and state two open questions.

\section{Preliminaries}
\label{prelim}

\subsection{Transition Monoids and Synchronization}
\label{subsec:transmonoid}

For a DFA $\mathrsfs{A}=(Q,\Sigma)$, the map $\tau_a\colon Q\to Q$ defined by the rule $q\mapsto q{\cdot}a$ is a transformation on the set $Q$.
\begin{definition}
\label{def:transmonoid}
The \emph{transition monoid} of a DFA $\mA=(Q,\Sigma)$ is the submonoid of the monoid of all transformations on the set $Q$ generated by the set $\{\tau_a\mid a\in\Sigma\}$.
\end{definition}

We denote the transition monoid of a DFA $\mA=(Q,\Sigma)$ by $T(\mA)$. It is easy to see that any product $\tau_{a_1}\tau_{a_2}\cdots\tau_{a_n}$ is nothing but the transformation $\tau_w$ defined by the rule $q\mapsto q{\cdot}w$ where $w$ stands for the word $a_1a_2\cdots a_n$. Thus, the transition monoid $T(\mA)$ can alternatively be defined as the monoid of all transformations on the set $Q$ caused by the action of words over $\Sigma$.

If $\mathrsfs{A}=(Q,\Sigma)$ is a synchronizing DFA and $w$ is a reset word for $\mA$, then the transformation $\tau_w$ is a constant map on $Q$, that is, $Q\tau_w=\{q\}$ for a certain $q\in Q$. Thus, the transition monoid of a \san{} always contains a constant transformation. Conversely, if $\zeta\in T(\mA)$ is a constant transformation, then any word $w$ with $\tau_w=\zeta$ is a reset word for $\mA$ and so $\mA$ is synchronizing. We see that synchronization is actually a property of the transition monoid of an automaton rather than the automaton itself: for DFAs $\mathrsfs{A}=(Q,\Sigma)$ and $\mathrsfs{A}'=(Q,\Sigma')$ with the same state set but different input alphabets, the equality $T(\mA)=T(\mA')$ guarantees that $\mA'$ is synchronizing if and only if so is $\mA$.

We can say a bit more about transition monoids of \sa{}, but for this, we first need to recall some concepts of semigroup theory. A non\-empty subset $I$ of a semigroup $S$ is called an \emph{ideal} in~$S$ if, for all $s\in S$ and $i\in I$, both products $si$ and $is$ lie in $I$. If $I$ and $J$ are two ideals in $S$, their intersection $I\cap J$ contains any product $ij$ with $i\in I$, $j\in J$. So $I\cap J$ is nonempty, and then it is easy to verify that  $I\cap J$ is again an ideal in $S$. Therefore, each finite semigroup $S$ has a least ideal called the \emph{kernel} of $S$ and denoted $\Ker S$.

The following observation is folklore but we provide a proof for the sake of completeness.
\begin{lemma}
\label{lem:kernel}
A DFA is synchronizing if and only if the kernel of its transition monoid consists of constant transformations.
\end{lemma}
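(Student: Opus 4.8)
The plan is to reduce everything to the characterization, already established in the text, that a DFA is synchronizing if and only if its transition monoid $T(\mA)$ contains a constant transformation. The backward direction is then immediate: the kernel of any finite semigroup is nonempty, so a kernel consisting of constant transformations certainly contains one; hence $T(\mA)$ contains a constant transformation and the DFA is synchronizing.

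For the forward direction, the key observation I would establish is that the set $C$ of all constant transformations lying in $T(\mA)$ forms an ideal of $T(\mA)$ as soon as it is nonempty. This rests on a short computation with the right action: if $\zeta\in T(\mA)$ is constant with $Q\zeta=\{q\}$ and $\sigma\in T(\mA)$ is arbitrary, then $Q(\sigma\zeta)=(Q\sigma)\zeta\subseteq Q\zeta=\{q\}$ and $Q(\zeta\sigma)=(Q\zeta)\sigma=\{q\sigma\}$, so both $\sigma\zeta$ and $\zeta\sigma$ are again constant and hence lie in $C$. Thus $C$ is closed under multiplication by elements of $T(\mA)$ on either side.

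I would then finish by invoking minimality of the kernel. Assuming the DFA is synchronizing, $T(\mA)$ contains a constant transformation, so $C$ is nonempty and is therefore a (two-sided) ideal of $T(\mA)$. Since $\Ker T(\mA)$ is the least ideal, we obtain $\Ker T(\mA)\subseteq C$, which says precisely that every element of the kernel is a constant transformation, completing the equivalence.

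I do not anticipate a genuine obstacle here; the argument is entirely structural once the characterization via constant transformations is in hand. The only point requiring care is bookkeeping the direction of composition: transformations act on the right in this paper, so one must check both $\sigma\zeta$ and $\zeta\sigma$ to confirm that $C$ is a two-sided ideal rather than merely a one-sided one, which is exactly what guarantees that it contains the kernel.
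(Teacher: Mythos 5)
Your proof is correct and follows essentially the same route as the paper's: both establish that the set $C$ of constant transformations in $T(\mathrsfs{A})$ is a two-sided ideal and then invoke minimality of the kernel to get $\Ker T(\mathrsfs{A})\subseteq C$. The only cosmetic difference is that the paper additionally observes $\tau\zeta=\zeta$ to conclude the stronger equality $\Ker T(\mathrsfs{A})=C$, which the lemma as stated does not require.
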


\begin{proof}
The `if' part readily follows from the already mentioned fact that if the transition monoid of a DFA contains a constant transformation, then the DFA is synchronizing.

For the `only if' part, let $\mA$ be a \sDFA; then the set $C$ of all constant transformations in the transition monoid $T(\mA)$ is nonempty. For any $\tau\in T(\mA)$ and any $\zeta\in C$, we have
\begin{equation}\label{eq:rightzero}
\tau\zeta=\zeta.
\end{equation}
Equality~\eqref{eq:rightzero} implies that the set $C$ is contained in every ideal of the monoid $T(\mA)$, in particular, in its kernel $\Ker T(\mA)$. Further, for every $\tau\in T(\mA)$ and every $\zeta\in C$, the product $\zeta\tau$ is a constant transformation: if $Q\zeta=\{q\}$, then $Q\zeta\tau=\{q\tau\}$. Together with \eqref{eq:rightzero}, this observation implies that $C$  forms an ideal in $T(\mA)$. Since $\Ker T(\mA)$ contains $C$, we have $\Ker T(\mA)=C$ by the definition of the kernel.
\end{proof}

\subsection{Structure of Semigroups in $\mathbf{DS}$}

Green \cite{Green:1951} defined five important relations on every semigroup $S$, collectively referred to as \emph{Green's relations}, of which we need the following three:
\begin{itemize}
\item[] $a\gR b \Longleftrightarrow {}$ either $a=b$ or $a=bs$ and $b=at$ for some $s,t\in S$;
\item[] $a\gL \,b \Longleftrightarrow {}$ either $a=b$ or $a=sb$ and $b=ta$ for some $s,t\in S$;
\item[] $a\gD b \Longleftrightarrow {}$ $a\gR c$ and $c\gL b$ for some $c\in S$.
\end{itemize}
The relations $\gR$ and $\gL$ are obviously equivalencies. The definition of ${\gD}$ means that ${\gD}$ is the product of $\gR$ and $\gL$ as binary relations. As observed in \cite{Green:1951}, ${\gD}$ is also the product of $\gL$ and $\gR$, and this implies that $\gD$ is the least equivalence containing both $\gR$ and $\gL$.

An element $a$ of a semigroup $S$ is said to be \emph{regular} if $asa=a$ for some $s\in S$. A $\gD$-class $D$ is called \emph{regular} if it contains a regular element. (In this case, every element of $D$ is known to be regular; see \cite[Theorem 6]{Green:1951}.) We denote by $\mathbf{DS}$ the set of all finite semigroups $S$ such that the regular $\gD$-classes of $S$ are subsemigroups in $S$.

The structure of semigroups $\mathbf{DS}$ is well understood in terms of their decompositions into some basic blocks. As we will use this structural result, we recall the notions involved.

A \emph{semilattice} is a semigroup satisfying the laws of commutativity $xy=yx$ and idempotency $x^2=x$.
\begin{definition}
\label{def:semilattice}
Let $Y$ be a semilattice and $\{S_y\}_{y\in Y}$ a family of disjoint semigroups indexed by the elements of $Y$. A semigroup $S$ is said to be a \emph{semilattice of semigroups} $S_y$, $y\in Y$, if:
\begin{description}
  \item[{\normalfont(S1)}] $S=\bigcup_{y\in Y}S_y$;
  \item[{\normalfont(S2)}] each $S_y$ is a subsemigroup in $S$;
  \item[{\normalfont(S3)}] for every $y,z\in Y$ and every $s\in S_y$, $t\in S_z$, the product $st$ belongs to $S_{yz}$.
\end{description}
\end{definition}

We say that a semigroup $S$ is $m$-\emph{nilpotent over its kernel} ($m$ being a positive integer) if every product of $m$ elements of $S$ belongs to $\Ker S$. We call a semigroup \emph{nilpotent over its kernel} if it is $m$-nilpotent over its kernel for some $m$. (To a semigroupist, finite semigroups nilpotent over their kernels are familiar as finite \emph{Archimedean} semigroups.)

The following is a specialization of the equivalence (4c) $\Leftrightarrow$ (1b) in \cite[Theorem 3]{Shevrin:94} to finite semigroups\footnote{Theorem 3 in \cite{Shevrin:94} deals with semigroups in which every element has a power that belongs to a subgroup. Every finite semigroup has this property.}.
\begin{lemma}
\label{lem:ds}
Every semigroup in\/ $\mathbf{DS}$ is a semilattice of semigroups nilpotent over their kernels.
\end{lemma}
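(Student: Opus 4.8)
The plan is to realise the decomposition through a single explicitly defined relation and then to locate the difficulty in proving that relation is a congruence. For $x,y\in S$ write $x\mid y$ to mean $y\in S^1xS^1$, and call $a,b\in S$ \emph{power-equivalent}, $a\sim b$, if $a\mid b^{m}$ and $b\mid a^{n}$ for some $m,n\ge 1$; thus each of $a,b$ divides a power of the other. This relation is reflexive and symmetric, and it is contained in every semilattice congruence on $S$ (mutual division of powers forces equal images in any semilattice, idempotency killing the exponents); consequently, as soon as $\sim$ is shown to be a congruence it is automatically the least semilattice congruence. Moreover, \emph{if} $\sim$ is a congruence then $S/{\sim}$ is automatically a semilattice: one checks $a\sim a^{2}$ and $ab\sim ba$ directly from the definition, so the quotient is idempotent and commutative. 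Taking $Y=S/{\sim}$ and the $\sim$-classes as the $S_y$ then yields (S1)--(S3) of Definition~\ref{def:semilattice}, and the lemma reduces to two claims: that $\sim$ is a congruence, and that each $\sim$-class is nilpotent over its kernel.

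The second claim I would settle first, since it does not need the full $\mathbf{DS}$ hypothesis once $\sim$ is known to be an equivalence. Inside a $\sim$-class $C$ every element divides a power of every other, i.e.\ $C$ is Archimedean. Hence, for any idempotent $e\in C$ and any $z\in\Ker C$, the relation $z\mid e^{k}=e$ gives $e\in C^1zC^1\subseteq\Ker C$, so all idempotents of $C$ lie in $\Ker C$. The kernel $\Ker C$, being the least ideal of a finite semigroup, is a single regular $\gD$-class; and the Rees quotient $C/\Ker C$ is a finite semigroup in which every element is nilpotent (the image of each $a$ vanishes, as $a^{\omega}\in\Ker C$). A finite nil semigroup is nilpotent, so some $m$ annihilates all products of length $m$ in $C/\Ker C$; that is precisely $m$-nilpotence of $C$ over $\Ker C$.

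The real work, and the main obstacle, is the first claim: that $a\sim b$ implies $ca\sim cb$ and $ac\sim bc$, so that $\sim$ is a congruence (in particular transitive). This is exactly where $\mathbf{DS}$ is indispensable, since for a general finite semigroup $\sim$ need not be compatible with multiplication. I would recast $\sim$ in terms of Green's relations, using that in a finite semigroup $\gD$ coincides with the principal-ideal equivalence and that $a\sim b$ amounts to $a^{\omega}\gD b^{\omega}$; compatibility then becomes a statement about how multiplication by a fixed $c$ transports the regular $\gD$-class of the idempotent powers. Controlling this transport is where one applies Green's Lemma, stability of finite semigroups, and---crucially---the hypothesis that regular $\gD$-classes are \emph{subsemigroups}, which forbids the relevant products from splitting across distinct $\gD$-classes and thereby keeps the images power-equivalent. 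This compatibility is the substance of the specialisation of \cite[Theorem 3]{Shevrin:94} cited above, which one may also invoke wholesale.
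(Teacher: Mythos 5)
First, a point of reference: the paper itself offers no proof of this lemma---it is imported verbatim as the specialization to finite semigroups of the equivalence (4c) $\Leftrightarrow$ (1b) in \cite[Theorem 3]{Shevrin:94}. Your proposal outlines the standard (Tamura--Putcha style) route to that result: take the mutual-power-division relation $\sim$, argue it is the least semilattice congruence, and show its classes are Archimedean, hence, being finite, nilpotent over their kernels. This is the right skeleton, and you correctly locate the crux---that $\sim$ is a congruence (in particular transitive)---as the one place where membership in $\mathbf{DS}$ is indispensable. But that crux is precisely what you do not prove: you close by proposing to ``invoke wholesale'' the very theorem being specialized. So, judged as a self-contained argument, the proposal has a genuine gap at its decisive step; judged as a reduction to the literature, it lands on the same citation the paper relies on, with some surrounding detail filled in.

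Two of the filled-in details also need repair. (i) You infer that a $\sim$-class $C$ is Archimedean because its elements mutually divide powers of one another, but the divisions defining $\sim$ have witnesses in $S$, not in $C$; the same issue affects your step $e\in C^1zC^1$. The fix is routine once $\sim$ is known to be a semilattice congruence (from $b^m=xay$ pass to $b^{m+2}=(bx)a(yb)$ and check that $bx$ and $yb$ fall into $C$ using $[b]\le[x],[y]$ in the quotient), but it must be said, and it again presupposes the unproved congruence property. (ii) The proposed recasting ``$a\sim b$ amounts to $a^{\omega}\gD b^{\omega}$'' is false for general finite semigroups: in the Brandt semigroup $B_2$, taking $a$ to be the matrix unit with $1$ in position $(1,2)$ and $b$ the idempotent matrix unit with $1$ in position $(1,1)$, one has $b=a\cdot a^{t}$ (with $a^t$ the transposed unit) and $a^2=0\in S^1bS^1$, so $a\sim b$, while $a^{\omega}=0$ and $b^{\omega}=b$ lie in different $\gD$-classes. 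This equivalence is itself a consequence of the $\mathbf{DS}$ hypothesis, so it cannot serve as a neutral reformulation made before that hypothesis is brought to bear. In short: your outline is sound and correctly diagnoses where the difficulty lies, but to count as a proof it would have to supply the compatibility argument it defers to \cite{Shevrin:94}---which is exactly what the paper also chooses not to do.
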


\section{Winning Strategy in Synchronization Games on DS-Automata}
\label{main}

We start with a visual yet rigorous description of the synchronization game under consideration. In this game, two players, Alice and Bob, play on a fixed DFA $\mathrsfs{A}=(Q,\Sigma)$. At the start, each state in $Q$ holds a token. During the game, some tokens can be removed according to the rules specified in the next paragraph. Alice wins if only one token remains, while Bob wins if he can keep at least two tokens unremoved for an indefinite amount of time.

Alice moves first, and then players alternate moves. The player whose turn is to move proceeds by selecting a letter $a\in\Sigma$. Then, for each state $q\in Q$ that held a token before the move, the token advances to the state $q{\cdot}a$. (In the standard graphical representation of $\mathrsfs{A}$ as the labelled digraph with $Q$ as the vertex set and the labelled edges of the form $q\xrightarrow{a}q{\cdot}a$, one can visualize the move as follows: all tokens simultaneously slide along the edges labelled $a$.) If several tokens arrive at the same state after this, all of them but one are removed so that when the move is completed, each state holds at most one token.

To illustrate, let us look at how the synchronization game might play out on the following DFA in which, initially, each state holds a token (shown in gray).
\begin{center}
\begin{tikzpicture}
	\node[fill=gray, circle, draw=blue, scale=1] (0) {$0$};
	\node[fill=gray, circle, draw=blue, scale=1, right of = 0, xshift= 2cm] (1) {$1$};
    \node[fill=gray, circle, draw=blue, scale=1, right of = 1, xshift= 2cm] (2) {$2$};
	\draw
		(1) edge[-latex, above]  node{$a$} (0)
        (1) edge[-latex, bend left, above]  node{$b$} (2)
		(2) edge[-latex, bend left, below] node{$b$}(1)
		(0) edge[-latex, loop left, left, in =135, out = -135, distance = 30] node{$a,b$} (0)
		(2) edge[-latex, loop right, right, out=45, in = -45, distance = 30] node{$a$} (2);
\end{tikzpicture}
\end{center}
If Alice chooses the letter $a$ on her first move, the tokens in states 0 and 2 remain due to the loops at these states. The token from state 1 moves to 0 and then is removed because the state 0 is `occupied'. Hence, the position after Alice's first moves looks as follows:
\begin{center}
\begin{tikzpicture}
	\node[fill=gray, circle, draw=blue, scale=1] (0) {$0$};
	\node[fill=white, circle, draw=blue, scale=1, right of = 0, xshift= 2cm] (1) {$1$};
    \node[fill=gray, circle, draw=blue, scale=1, right of = 1, xshift= 2cm] (2) {$2$};
	\draw
		(1) edge[-latex, above]  node{$a$} (0)
        (1) edge[-latex, bend left, above]  node{$b$} (2)
		(2) edge[-latex, bend left, below] node{$b$}(1)
		(0) edge[-latex, loop left, left, in =135, out = -135, distance = 30] node{$a,b$} (0)
		(2) edge[-latex, loop right, right, out=45, in = -45, distance = 30] node{$a$} (2);
\end{tikzpicture}
\end{center}
If Bob replies by choosing the letter $b$, the token in state 0 remains while the token from state 2 moves to 1. Here is the position after Bob's reply:
\begin{center}
\begin{tikzpicture}
	\node[fill=gray, circle, draw=blue, scale=1] (0) {$0$};
	\node[fill=gray, circle, draw=blue, scale=1, right of = 0, xshift= 2cm] (1) {$1$};
    \node[fill=white, circle, draw=blue, scale=1, right of = 1, xshift= 2cm] (2) {$2$};
	\draw
		(1) edge[-latex, above]  node{$a$} (0)
        (1) edge[-latex, bend left, above]  node{$b$} (2)
		(2) edge[-latex, bend left, below] node{$b$}(1)
		(0) edge[-latex, loop left, left, in =135, out = -135, distance = 30] node{$a,b$} (0)
		(2) edge[-latex, loop right, right, out=45, in = -45, distance = 30] node{$a$} (2);
\end{tikzpicture}
\end{center}
Now choosing $a$, Alice wins because after the token from state 1 moves to 0, it is removed, and we get the position with only one token:
\begin{center}
\begin{tikzpicture}
	\node[fill=gray, circle, draw=blue, scale=1] (0) {$0$};
	\node[fill=white, circle, draw=blue, scale=1, right of = 0, xshift= 2cm] (1) {$1$};
    \node[fill=white, circle, draw=blue, scale=1, right of = 1, xshift= 2cm] (2) {$2$};
	\draw
		(1) edge[-latex, above]  node{$a$} (0)
        (1) edge[-latex, bend left, above]  node{$b$} (2)
		(2) edge[-latex, bend left, below] node{$b$}(1)
		(0) edge[-latex, loop left, left, in =135, out = -135, distance = 30] node{$a,b$} (0)
		(2) edge[-latex, loop right, right, out=45, in = -45, distance = 30] node{$a$} (2);
\end{tikzpicture}
\end{center}

Notice that Alice won the game described above only because of Bob's unfortunate reply. In fact, Bob has a winning strategy in the synchronization game on this DFA: if he repeats Alice's moves, that is, chooses the same letter Alice chose on her previous move, he can maintain two tokens unremoved forever. Hence, there are simple synchronizing DFAs that are not A-automata.

Recall that a DS-automaton is a DFA whose transition monoid lies in the set $\mathbf{DS}$ of all finite semigroups with regular $\gD$-classes being subsemigroups. The following is the main result of this note.

\begin{theorem}
\label{thm:main}
Alice has a winning strategy on every synchronizing DS-automaton.
\end{theorem}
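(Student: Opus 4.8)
The plan is to recast the game algebraically and then exhibit a single \emph{oblivious} winning strategy for Alice. Identify a position with the set $P\subseteq Q$ of states still carrying a token; after the word $w$ has been played one has $P=Q{\cdot}w$, the image of $\tau_w$, so the token count $|P|$ is exactly the rank of $\tau_w$. Because tokens can only merge, never split, $|P|$ is non-increasing as the game proceeds, no matter who moves. Hence either Alice reaches $|P|=1$ and wins in finitely many moves, or the play is infinite and the non-increasing count stabilizes at some value $\rho\ge 2$; my task is to rule out the latter. The strategy I propose is deliberately crude and ignores Bob completely: fix once and for all a word $p=a_1a_2\cdots a_k$ in which every letter of $\Sigma$ occurs, and let Alice play the letters of $p$ cyclically, one per turn, forever.

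The correctness rests on two facts about the bottom block of the semilattice decomposition. Applying Lemma~\ref{lem:ds} to $T(\mA)\in\mathbf{DS}$, write $T(\mA)=\bigcup_{y\in Y}S_y$ as a semilattice $Y$ of semigroups nilpotent over their kernels, let $\varphi\colon T(\mA)\to Y$ be the associated homomorphism, and let $y_0$ be the least element of $Y$. Since $Y$ is commutative and idempotent, $\varphi(\tau_v)$ is just the meet of the components of the letters occurring in $v$; in particular $\varphi(\tau_p)=y_0$, and \emph{any} word in which every letter occurs has component $y_0$ and so lands in $S_{y_0}$. Secondly, a short check shows every constant transformation has component $y_0$, so the constants $C$ form an ideal of $S_{y_0}$; by Lemma~\ref{lem:kernel} they are precisely $\Ker T(\mA)$, and as $S_{y_0}$ is $m$-nilpotent over its kernel for some $m$, minimality forces $S_{y_0}^{\,m}\subseteq C$. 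The punchline is a rigidity statement: \emph{no} $s\in S_{y_0}$ can restrict to a bijection of a set $X\subseteq Q$ with $|X|\ge 2$, since otherwise every power $s^{\ell}$ would permute $X$ and hence have rank $\ge 2$, contradicting $\operatorname{rank}(s^{m})=1$.

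Now suppose the count stabilizes at $\rho\ge 2$. From the moment of stabilization onward every move preserves the count, so each played letter acts as a bijection between consecutive token sets. Record the token set $P_\ell$ immediately after Alice completes her $\ell$-th pass through $p$; each $P_\ell$ is a $\rho$-element subset of $Q$, and as there are only finitely many such subsets we get $P_i=P_j$ for some $i<j$. Let $v$ be the interleaved word actually played between these two moments. On one hand $v$ contains a full copy of $p$, hence every letter, so $\tau_v\in S_{y_0}$; on the other hand the rank-preserving passage carries $P_i$ to $P_j=P_i$ bijectively, so $\tau_v$ restricts to a bijection of the $\rho$-element set $P_i$. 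This is exactly what the rigidity statement forbids, so $\rho=1$ and Alice synchronizes. (The underlying game being a finite-state reachability game, this equivalently shows Bob has no safe trap.)

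The crux, and the only genuine use of membership in $\mathbf{DS}$, is the tension exploited in the third paragraph: the idempotent commutativity of the semilattice lets Alice's periodic injection of $p$ neutralise all of Bob's interference \emph{at the level of components}, pinning $\tau_v$ inside $S_{y_0}$, while nilpotency of $S_{y_0}$ over its constant kernel makes it impossible to permute two or more tokens within that block. I expect the care to be concentrated in the two structural facts of the second paragraph — that the constants are precisely the kernel of $S_{y_0}$, so that $S_{y_0}^{\,m}$ collapses to rank $1$, and that a between-repeats word inherits component $y_0$ — both of which are routine once the decomposition of Lemma~\ref{lem:ds} is available.
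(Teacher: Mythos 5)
Your proof is correct, but it reaches the theorem by a genuinely different route. Both arguments share the structural skeleton: Lemma~\ref{lem:ds} applied to $T(\mA)$, the fact that the bottom component $S_{y_0}$ (the paper's $S_z$) is an ideal whose kernel is contained in the set $C$ of constants (via Lemma~\ref{lem:kernel}), and the $m$-nilpotence of $S_{y_0}$ over that kernel. The strategies and the termination arguments differ. The paper has Alice play \emph{adaptively}: after each of Bob's replies she locates the component $S_y$ of the word played so far and, if $y\ne y_0$, chooses a letter $a$ with $y\nleq y(a)$, forcing a strict descent in the finite semilattice; after $m$ such rounds the play is a product of $m$ elements of $S_{y_0}$ and hence a constant. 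This exhibits the reset word directly and bounds the game length by $m$ times the height of $Y$. You instead give an \emph{oblivious} strategy (cycle a fixed word containing every letter) and argue by contradiction: a stabilized token count $\rho\ge2$ would, by pigeonhole over $\rho$-subsets, yield a word $v$ that contains every letter --- hence $\varphi(\tau_v)=\prod_{a\in\Sigma}y(a)=y_0$ and $\tau_v\in S_{y_0}$ --- yet permutes a $\rho$-element set, contradicting the constancy of $\tau_v^{m}$. Your route buys a strictly stronger conclusion: Alice can win without observing Bob's moves at all, which the paper's adaptive strategy does not claim. The price is that the reset word is produced non-constructively and the implied bound on game length, coming from pigeonhole over subsets of $Q$, is exponential rather than the paper's bound polynomial in $m$ and the height of $Y$. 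The auxiliary claims you flag as routine (every constant has component $y_0$, so $\Ker S_{y_0}\subseteq C$ and $S_{y_0}^{\,m}\subseteq C$; and $\prod_{a\in\Sigma}y(a)$ is the least element of $Y$ because every element of $Y$ is a product of letter components) all check out.
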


\begin{proof}
Take an arbitrary synchronizing DS-automaton $\mA=(Q,\Sigma)$. We denote the transition monoid $T(\mA)$ by $S$ to lighten the notation. Since $S\in\mathbf{DS}$, by Lemma~\ref{lem:ds} there is a semilattice $Y$ such that $S$ is a semilattice of semigroups $S_y$, $y\in Y$, where each semigroup $S_y$ is nilpotent over its kernel.

The relation $\le$ defined by $x\le y\Longleftrightarrow  xy=x$ is known (and easy to see) to be a partial order on every semilattice, and so on $Y$. Due to the laws of commutativity and idempotency, the inequalities $xy\le x$ and $xy\le y$ hold for all $x,y\in Y$. Since the semilattice $Y$ is finite, it has a least element with respect to this order; denote it by $z$. Then $yz=zy=z$ for every $y\in Y$ whence the semigroup $S_z$ is an ideal in $S$ by item (S3) in Definition~\ref{def:semilattice}. Therefore $S_z$ contains the kernel $\Ker S$ of $S$, and therefore, $\Ker S_z\subseteq\Ker S$. (In fact, it is easy to show that the equality $\Ker S_z=\Ker S$ holds, but it is not needed for the present proof.)

Fix a positive integer $m$ such that the semigroup $S_z$ is $m$-nilpotent over its kernel $\Ker S_z$. We show that Alice can win in the synchronization game on $\mA$, using the following $m$-round strategy. Denote by $a^k_i$ and $b^k_i$ the $i$-th letters chosen in the $k$-th round by Alice and Bob, respectively. In each round, Alice chooses the first letter $a^{k}_1$ at random. Then, after each reply of Bob, she checks whether the word $u^k_i:=a^{k}_1b^{k}_1\cdots a^k_ib^k_i$ causes a transformation in $S_z$. If yes, then Alice starts the next round. If no, the transformation caused by $u^k_i$ lies in some subsemigroup $S_y$ with $y\ne z$ (recall that $S=\bigcup_{y\in Y}S_y$ by item (S1) in Definition~\ref{def:semilattice}). For each letter $a\in\Sigma$, denote by $y(a)$ the element of the semilattice $Y$ such that the transformation $\tau_a$ lies in the subsemigroup $S_{y(a)}$. Take any transformation $\tau\in S_z$. By Definition~\ref{def:transmonoid}, $\tau=\tau_{a_1}\tau_{a_2}\cdots\tau_{a_n}$ for some $a_1,a_2,\dots,a_n\in\Sigma$ whence by item (S3) in Definition~\ref{def:semilattice}, $z=y(a_1)y(a_2)\cdots y(a_n)$. If $y\le y(a_i)$ for all $i=1,2,\dots,n$, then $y\le y(a_1)y(a_2)\cdots y(a_n)=z$, and this would contradict the choice of $z$ as the least element with respect to $\le$ and the assumption $y\ne z$. Thus, there must be a letter $a\in\Sigma$ such that $y\nleq y(a)$, and Alice chooses any such letter $a$ as $a^{k}_{i+1}$.

By the construction, if $S_y\ne S_z$, then the index $x$ of the subsemigroup $S_x$ containing the transformation caused by the word $u^k_{i+1}$ is strictly less than $y$ in the partially ordered set $(Y;{\le})$. Hence, for each $k$, the number $\ell_k$ of pairs of moves in the $k$-th round does not exceed the maximum length of strictly decreasing chains in $(Y;{\le})$, and the transformation caused by the word $u^k_{\ell_k}$ lies in the subsemigroup $S_z$. Then the transformation $\tau_w$ caused by the word $w:=u^1_{\ell_1}u^2_{\ell_2}\cdots u^m_{\ell_m}$ is a product of $m$ elements of $S_z$ and so $\tau_w$ belongs to $\Ker S_z$ as the semigroup $S_z$ is $m$-nilpotent over its kernel. Since $\mA$ is a \san, the kernel $\Ker S$ of its transition monoid consists of constant transformations by Lemma~\ref{lem:kernel}. From the inclusion $\Ker S_z\subseteq\Ker S$ registered above, we conclude that $\tau_w$ is a constant transformation, and so $w$ is a reset word for $\mA$.
\end{proof}

\section{Relations to Earlier Results and Future Work}
\label{discussion}

\subsection{Corollaries}

In the introduction, we mentioned a few previously known families of A-autom\-a\-ta. Now we show that all these families consist of DS-automata so their winning strategies are subsumed by that of Theorem~\ref{thm:main}.

\paragraph{Definite automata.} This DFA family was introduced by some of the pioneers of automata theory back in 1963~\cite{PRS:1963}. In~\cite{PRS:1963}, the term `automaton' meant a recognizer, that is, a DFA with a designated initial state and a distinguished set of final states. However, DFAs without initial and final states as defined in the present note also appeared in~\cite{PRS:1963} but under the name `transition tables'. The following is \cite[Definition 13]{PRS:1963} stated in our terminology and notation.
\begin{definition}
\label{def:definite}
A DFA $(Q,\Sigma)$ is \emph{weakly $k$-definite} if for every word $w$ of length at least $k$ over $\Sigma$, $q{\cdot}w=q'{\cdot}w$ for all $q,q'\in Q$. A DFA is $k$-\emph{definite} if it is weakly $k$-definite but not weakly $(k-1)$-definite. A DFA is \emph{definite} if it is $k$-definite for some $k$.
\end{definition}

By Definition~\ref{def:definite}, every definite DFA is synchronizing and any word of length at least $k$ is a reset word for every $k$-definite DFA. Therefore, Alice wins on every definite automaton by selecting her moves at random.

The transition monoid of a definite DFA is nilpotent over its kernel. This fact is implicitly contained in~\cite{PRS:1963} and in the explicit form, it is a part of~\cite[Theorem~3]{Zal72}. Comparing it with Lemma~\ref{lem:ds}, we see that definite automata constitute a special subfamily of DS-automata. Moreover, for definite automata, Alice's winning strategy from Theorem~\ref{thm:main} specializes exactly to the random choice of moves. In fact, a DFA is definite if and only if Alice can win by pure random choices of moves.

\paragraph{Commutative automata.} A DFA is said \emph{commutative} if its transition monoid is commutative, that is, satisfy the law $xy=yx$. Synchronizing commutative automata were considered in \cite{Rys96,Rys97,Hof21a}. A simple winning strategy for Alice in synchronization games on such automata was suggested in~\cite[Theorem 5.2]{FUN22}: Alice must just choose letters spelling a reset word. Hence, as in the previous case, Alice can completely ignore the moves of Bob.

Obviously, on any commutative semigroup, Green's relations $\gR$ and $\gL$ coincide with each other, and hence, with the relation $\gD$. If an element $a$ of a commutative semigroup $S$ is regular, then $a=a^2s$ for some $s\in S$ whence  $a\gR a^2$. By \cite[Theorem 7]{Green:1951}, this ensures that the $\gD$-class containing $a$ is a subsemigroup (and even a subgroup) of $S$. Thus, in all commutative semigroups, regular $\gD$-classes are subsemigroups. Therefore, commutative DFAs are DS-automata, and Theorem~\ref{thm:main} applies to commutative \sDFAs.

\paragraph{Weakly acyclic automata.} Let $\mA = (Q,\Sigma)$ be a DFA and $p, q \in Q$. We say that $q$ is \emph{reachable from~$p$ in} $\mA$ if either $p=q$ or there exists a word $w$ over $\Sigma$ such that $q = p{\cdot}w$. The reachability relation of any DFA is reflexive and transitive. A DFA is called \emph{weakly acyclic} if its reachability relation is a partial order.

Various properties of synchronizing weakly acyclic DFAs were considered in \cite{Ryz19,Hof21b}. A winning strategy for Alice in synchronization games on such automata was suggested in~\cite[Theorem 2.3]{FUN22}.

By~\cite[Proposition 6.2]{BF80} the transition monoid $T(\mA)$ of any weakly acyclic DFA $\mA$ is $\gR$-\emph{trivial}, which means that Green's relation $\gR$ on $T(\mA)$ coincides with the equality relation. It is well known that regular $\gD$-classes are subsemigroups in any $\gR$-trivial semigroup, but we failed to locate a source where this fact was formulated such that it would be convenient to refer to it. Therefore we state it as a lemma and provide a proof.
\begin{lemma}
\label{lem:rtrivial}
In every $\gR$-trivial semigroup, regular $\gD$-classes are subsemigroups.
\end{lemma}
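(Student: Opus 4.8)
The plan is to prove something slightly stronger than mere closure: I would show that in an $\gR$-trivial semigroup every regular $\gD$-class $D$ consists \emph{entirely of idempotents}, and then observe that such a class is automatically closed under multiplication. The first preparatory remark is that $\gR$-triviality collapses $\gD$ onto $\gL$. Indeed, since $\gD$ is the product $\gR\circ\gL$ and every $\gR$-class is a singleton, the relation $a\gD b$ forces $a\gL b$. Hence every $\gD$-class, and in particular $D$, is a single $\gL$-class.

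The crux of the argument is the idempotency step. By \cite[Theorem~6]{Green:1951}, every element of a regular $\gD$-class is regular, so each $a\in D$ satisfies $a=asa$ for some $s\in S$. I would then set $e:=as$ and verify the two routine facts that $e$ is idempotent (since $e^2=(as)(as)=(asa)s=as=e$) and that $a\gR e$ (the equalities $e=as$ and $a=ea$ exhibit the required products). Triviality of $\gR$ then yields $a=e=as$, so $a$ is idempotent; as $a$ was arbitrary, every element of $D$ is an idempotent. This is the step I expect to carry the whole proof, and the only place where genuine structure is used; everything else is bookkeeping with Green's relations.

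Finally, for closure I would take arbitrary $a,b\in D$. By the first remark $a\gL b$, so either $a=b$ or $a=sb$ for some $s\in S$. Using $b^2=b$ from the idempotency step, in the second case $ab=(sb)b=sb^2=sb=a$, and in the first case $ab=a^2=a$; either way $ab=a\in D$. Thus $D$ is closed under the multiplication of $S$ and is therefore a subsemigroup, which is exactly the assertion of the lemma. I would also note that this argument needs no finiteness hypothesis, relying only on the general identity $\gD=\gR\circ\gL$ and on Green's regularity theorem, so it applies verbatim to the transition monoids of weakly acyclic DFAs invoked above.
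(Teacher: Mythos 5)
Your proof is correct and follows essentially the same route as the paper's: both arguments first deduce that every element of a regular $\gD$-class in an $\gR$-trivial semigroup is idempotent (you via $a\gR as$ forcing $a=as$, the paper via $b\gR bt$ forcing $b=bt$ and hence $b^2=b$), and then use ${\gD}={\gL}$ together with $b^2=b$ to get $ab=a\in D$. The only differences are cosmetic bookkeeping.
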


\begin{proof}
Let $S$ be an $\gR$-trivial semigroup and let $D$ be its regular $\gD$-class. Every element $b\in D$ is regular, so that $b=btb$ for some $t\in S$. Then $b\gR bt$ whence $b=bt$ since $S$ is $\gR$-trivial. We have $b^2=btbt=(btb)t=bt=b$. Now if $a\in D$, then $a\gL b$ since ${\gD}={\gL}$ in every $\gR$-trivial semigroup. By the definition of the relation $\gL$, we have either $a=b$ or $a=sb$ for some $s\in S$. If $a=b$, then $ab=b^2=b=a$. If $a=sb$,  multiplying through on the right by $b$, we get $ab=sb^2=sb=a$. Thus, $ab=a\in D$ for arbitrary $a,b\in D$, whence $D$ is a subsemigroup.
\end{proof}
Thus, weakly acyclic DFAs are DS-automata, and Theorem~\ref{thm:main} applies again.

\subsection{Open Questions}

Theorem~\ref{thm:main} generalizes several earlier results on A-automata. Can it be further generalized? This is an interesting question, but it requires specification.

We mentioned in Section~\ref{subsec:transmonoid} that synchronization is a property of transition monoids. This is not true for the property of being an A-automaton. Moreover, for every \sDFA{} $\mathrsfs{A}=(Q,\Sigma)$, there exists an A-automaton $\mathrsfs{A}'=(Q,\Sigma')$  such that $T(\mA)=T(\mA')$. To see this, let $\Sigma':=\Sigma\cup\{c\}$ where the action of the added letter $c$ coincides with the action of a fixed reset word for $\mA$. The  transformations caused by the letters in $\Sigma'$ generate the same submonoid of the monoid of all transformations on the set $Q$ as do the transformations caused by the letters in $\Sigma$. Still, Alice instantly wins the synchronization game on $\mA'$ by choosing the letter $c$ on her first move.

Thus, one cannot hope for a characterization of A-automata in terms of transition monoids. One can, however, try to find new sets $\mathbf{P}$ of finite semigroups such that $\mathbf{DS}\subsetneq\mathbf{P}$ and every \sDFA{} whose transition monoid lies in $\mathbf{P}$ is an A-automaton.

It is easy to verify that the property of being an A-automaton is inherited by subautomata, homomorphic images, and finite direct products. This suggests looking for sets $\mathbf{P}$ closed under corresponding operations with semigroups. A set of finite semigroups closed under forming finite direct products and taking subsemigroups and homomorphic images is called a \emph{pseudovariety}; the set $\mathbf{DS}$ is an example of a pseudovariety.
Using the notion of a pseudovariety, we can specify a possible direction towards generalizing Theorem~\ref{thm:main} as follows:
\begin{question}
\label{que:pseudo}
Is there a pseudovariety\/ $\mathbf{P}$ of finite semigroups that strictly contains $\mathbf{DS}$ while all \sDFAs{} with transition monoids in $\mathbf{P}$ are A-automata?
\end{question}

The 5-element \emph{Brandt semigroup} $B_2$ consists of the following five $2\times2$-matrices, multiplied according to the usual rule:
\[
\begin{pmatrix}
1 & 0\\ 0 & 0
\end{pmatrix},\
\begin{pmatrix}
0 & 1\\ 0 & 0
\end{pmatrix},\
\begin{pmatrix}
0 & 0\\ 1 & 0
\end{pmatrix},\
\begin{pmatrix}
0 & 0\\ 0 & 1
\end{pmatrix},\
\begin{pmatrix}
0 & 0\\ 0 & 0
\end{pmatrix}.
\]
The monoid $B_2^1$ obtained by adding the identity $2\times2$-matrix to $B_2$ is the syntactic monoid of the language $(ab)^*$, and hence, the transition monoid of the minimal automaton $\mM$ of this language (shown below).
\begin{center}
\begin{tikzpicture}
	\node[fill=white, circle, draw=blue, scale=1] (0) {$0$};
	\node[fill=white, circle, draw=blue, scale=1, above left of = 0, xshift= -1cm, yshift= 1cm] (1) {$1$};
    \node[fill=white, circle, draw=blue, scale=1, right of = 1, xshift= 2.25cm] (2) {$2$};
	\draw
		(1) edge[-latex, below]  node{$b$} (0)
        (1) edge[-latex, bend left, above]  node{$a$} (2)
		(2) edge[-latex, bend left, below] node{$b$}(1)
		(2) edge[-latex, below]  node{$a$} (0)
		(0) edge[-latex, loop below, below, in =-65, out = -120, distance = 30] node{$a,b$} (0);
		\end{tikzpicture}
\end{center}
It is known that every pseudovariety of finite semigroups not contained in $\mathbf{DS}$ must include the semigroup $B_2$ (this fact occurs as Exercise~8.1.6 in \cite{Almeida-95}; the solution to this exercise follows from the proof of \cite[Theorem 3]{Margolis81}). Thus, if Bob had a winning strategy on the automaton $\mM$, then the answer to Question~\ref{que:pseudo} would be `No', and moreover, $\mathbf{DS}$ would be the largest pseudovariety $\mathbf{P}$ with the property that Alice can win the synchronization game on every \sDFA{} whose transition monoid lies in $\mathbf{P}$. However, it is easy to see that Alice wins on $\mM$: she can start with choosing $a$; if Bob replies with $a$, he loses, and if he replies with $b$, Alice wins by choosing $b$.

The fact that $\mM$ is an A-automaton, along with other examples of A-autom\-a\-ta beyond the family of DS-automata, indicates that Question~\ref{que:pseudo} might have an affirmative answer. We have a candidate pseudovariety to witness such an answer but its definition requires more structure theory of semigroups than assumed here.

Another question of interest concerns the speed of synchronization for A-automata. When we mentioned in the introduction that A-automata seem more amenable to synchronization, we meant that they tend to have short reset words. Indeed, in all examples we know, an A-automaton with $n$ states admits a reset word of length at most $n-1$. For DFAs in the range of Theorem~\ref{thm:main}, that is, synchronizing DS-automata, this was established in \cite[Theorem 2.6]{AlSt09}. The DFA $\mM$ with 3 states is reset by the words $a^2$ and $b^2$ of length 2 and hence provides another example. These observations lead to the next question.

\begin{question}
\label{que:length}
Is each A-automaton with $n$ states reset by a word of length $n-1$?
\end{question}

Recall that for each $n>2$, there exist \sDFAs\ with $n$ states whose shortest reset words have length $(n-1)^2$; see \cite[Lemma 1]{Cerny:1964}. It can be verified that none of these `slowly synchronizing' DFAs are A-automata.

\paragraph{Acknowledgement.} The authors are grateful to the anonymous referees for their attention and remarks.


\begin{thebibliography}{11}
\bibitem{Almeida-95}
Almeida, Jorge: Finite Semigroups and Universal Algebra. Series in Algebra, vol. 3. World Scientific, Singapore (1994)

\bibitem{AMSV09}
Almeida, Jorge, Margolis, Stuart, Steinberg, Benjamin, Volkov, Mikhail:  Representation theory of finite semigroups, semigroup radicals and formal language theory, Transactions of the American Mathematical Society \textbf{361}:3, 1429--1461 (2009). \url{https://doi.org/10.1090/S0002-9947-08-04712-0}

\bibitem{AlSt09}
Almeida, Jorge, Steinberg, Benjamin: Matrix mortality and the \v{C}ern\'y--Pin conjecture. In: Diekert, Volker, Nowotka, Dirk (eds.), Developments in Language Theory, 13th International Conference (DLT 2009), Lecture Notes in Computer Science, vol. 5583, pp. 67--80. Springer, Berlin (2009). \url{https://doi.org/10.1007/978-3-642-02737-6\_5}

\bibitem{BGNV06}
Blass, Andreas, Gurevich, Yuri, Nachmanson, Lev, Veanes, Margus: Play to test. In: Gries\-kamp, Wolfgang, Weise, Carsten (eds.), Formal Approaches to Software Testing, 5th International Workshop (FATES 2005), Lecture Notes in Computer Science, vol. 3997, pp. 32--46. Springer, Berlin, Heidelberg (2006). \url{https://doi.org/10.1007/11759744\_3}

\bibitem{BF80}
Brzozowski, Janusz A., Fich, Faith E.: Languages of $\mathrsfs{R}$-trivial monoids. Journal of Computer and System Sciences \textbf{20}:1, 32--49 (1980). \url{https://doi.org/10.1016/0022-0000(80)90003-3}

\bibitem{Cerny:1964}
\v{C}ern\'{y}, Jan: Pozn\'{a}mka k homog\'{e}nnym eksperimentom s kone\v{c}n\'{y}mi automatami. Matematicko-fyzikalny \v{C}asopis Slovenskej Akad\'emie Vied \textbf{14}(3): 208--216 (1964) (in Slovak; English translation: A note on homogeneous experiments with finite automata. Journal of Automata, Languages, and Combinatorics \textbf{24}:2-4, 123--132 (2019). \url{https://doi.org/10.25596/jalc-2019-123})

\bibitem{FUN22}
Fernau, Henning, Haase, Carolina, Hoffmann, Stefan: The synchronization game on subclasses of automata. In: Fraigniaud, Pierre, Uno, Yushi (eds.), 11th International Conference on Fun with Algorithms (FUN 2022), Leibniz International Proceedings in Informatics (LIPIcs), vol. 226, pp. 14:1--14:17. Schloss Dagstuhl -- Leibniz-Zentrum f{\"u}r Informatik, Dagstuhl (2022). \url{https://doi.org/10.4230/LIPIcs.FUN.2022.14}

\bibitem{FMV}
Fominykh, Fedor M., Martyugin, Pavel V., Volkov, Mikhail V.: P(l)aying for synchronization, International Journal of Foundations of Computer Science \textbf{24}:6, 765--780 (2013). \url{https://doi.org/10.1142/S0129054113400170}

\bibitem{Green:1951}
Green, James Alexander: On the structure of semigroups. Annals of Mathematics, Second Series \textbf{54}:1, 163--172 (1951). \url{https://doi.org/10.2307/1969317}

\bibitem{Hof21a}
Hoffmann, Stefan: Constrained synchronization and commutativity. Theoretical Computer Science \textbf{890}, 147-–170 (2021). \url{https://doi.org/10.1016/j.tcs.2021.08.030}

\bibitem{Hof21b}
Hoffmann, Stefan: Constrained synchronization and subset synchronization problems for weakly acyclic automata, in: Moreira, Nelma, Reis, Rog\'erio (eds.), Developments in Language Theory, 25th International Conference (DLT 2021), Lecture Notes in Computer Science, vol. 12811, pp. 204--216. Springer, Cham (2021). \url{https://doi.org/10.1007/978-3-030-81508-0\_17}

\bibitem{KV}
Kari, Jarkko, Volkov, Mikhail: \v{C}ern\'{y}'s conjecture and the road colouring problem. In: Pin, Jean-\'Eric (ed.), Handbook of Automata Theory, vol. I, pp. 525--565. EMS Publishing House, Z\"urich (2021). \url{https://doi.org/10.4171/AUTOMATA-1/15}

\bibitem{Margolis81}
Margolis, Stuart W. On M-varieties generated by power monoids. Semigroup Forum \textbf{22}, 339--353 (1981). \url{https://doi.org/10.1007/BF02572813}

\bibitem{PRS:1963}
Perles, Micha, Rabin, Michael O., Shamir, Eliahu: The theory of definite automata. Transactions on Electronic Computers \textbf{12}, 233--243 (1963).  \url{https://doi.org/10.1109/PGEC.1963.263534}

\bibitem{Rys96}
Rystsov, Igor: Exact linear bound for the length of reset words in commutative automata. Publicationes Mathematicae Debrecen \textbf{48}:3-4, 405--409 (1996)

\bibitem{Rys97}
Rystsov, Igor: Reset words for commutative and solvable automata. Theoretical Computer Science \textbf{172}:1–2, 273--279 (1997). \url{https://doi.org/10.1016/S0304-3975(96)00136-3}

\bibitem{Ryz19}
Ryzhikov, Andrew: Synchronization problems in automata without non-trivial cycles. Theoretical Computer Science \textbf{787}, 77--88 (2019). \url{https://doi.org/10.1016/j.tcs.2018.12.026}

\bibitem{Shevrin:94}
Shevrin, Lev N.: On the theory of epigroups. I. Russian Academy of Sciences. Sbornik. Mathematics \textbf{82}:2, 485--512 (1995). \url{https://doi.org/10.1070/SM1995v082n02ABEH003577}

\bibitem{Vo22}
Volkov, Mikhail V.: Synchronization of finite automata. Russian Mathematical Surveys \textbf{77}:5, 819--891 (2022). \url{https://doi.org/10.4213/rm10005e}

\bibitem{Zal72}
Zalcstein, Yechezkel: Locally testable languages. Journal of Computer and System Sciences \textbf{6}, 151--167 (1972). \url{https://doi.org/10.1016/S0022-0000(72)80020-5} 
\end{thebibliography}
\end{document}